\newtheorem{theorem}{Theorem}
\newtheorem{definition}{Definition}
\newtheorem{example}{Example}
\newenvironment{proof}[1][Proof]{\noindent\textbf{#1.} }{\ \rule{0.5em}{0.5em}}
\begin{document}
\title{\textbf{Quantum codes from codes over Gaussian integers with respect to the Mannheim metric}}
\author{ Mehmet \"{O}zen, Murat G\"{u}zeltepe  \\
%EndAName
{\small Department of Mathematics, Sakarya University, TR54187
Sakarya, Turkey }}
\date{}
\maketitle

\begin{abstract}
In this paper, some nonbinary quantum codes using classical codes
over Gaussian integers are obtained. Also, some of our quantum
codes are better than or comparable with those known before, (for
instance $[[8,2,5]]_{4+i}$).
\end{abstract}

%%% ----------------------------------------------------------------------

\bigskip \textsl{AMS Classification:}{\small \ 94B05, 94B60, 81P70}

\textsl{Keywords:\ }{\small Nonbinary quantum codes, MDS codes,
Mannheim metric, Hamming metric.}

\section{Introduction } An important class of quantum codes are Calderbank-Shor-Steane
(shortly CSS) codes. In fact, CSS codes are obtained from two
classical codes such that one of these codes contains the other
code. Moreover, the bit flip and the phase flip error correcting
capacities of a CSS code depends on the classical code that
contains the other code and the dual code of the other classical
code, respectively [1, pp. 450-451]. The possibility of correcting
decoherence errors in entangled states was discovered by Shor
\cite{2} and Steane \cite{3}. Binary quantum CSS codes have been
constructed in several ways (for instance \cite{3,10,11,12}). In
\cite{10}, good quantum codes of minimum distance three and four
for such length $n$ are obtained via Steane's construction and the
CSS construction. In \cite{11}, a large number of good quantum
codes of minimum distance five and six by Steane's Construction
were given. In \cite{12}, some quantum error correcting codes,
including an optimal quantum code $\left[ {\left[ {27,13,5}
\right]} \right]$ , were presented. Later, some results were
generalized to the case of nonbinary stabilizer codes
\cite{5,6,7,8}. A connection between classical codes and nonbinary
quantum codes was given in \cite{5,6,7}. However, the theory
explained in \cite{5,6,7} is not nearly as complete as in the
binary case. The closest theory to the binary case of nonbinary
stabilizer codes was presented in \cite{8}.\

\

On the other hand, the  Mannheim metric was introduced by Huber in
\cite{9}. It is well known that the Euclidean metric is the
relevant metric for maximum-likelihood decoding. Although the
Mannheim metric is a reasonable approximation to it, it is not a
priori, a natural choice. However, the codes being proposed are
very useful in coded modulation schemes based on quadrature
amplitude modulation (QAM)-type constellations for which neither
Hamming nor Lee metric is appropriate.\
\

The rest of this paper is organized as follows. In section 2,
classical codes over Gaussian integer ring with respect to the
Mannheim metric are given. In Section 3, error bases are defined
and quantum codes with respect to the Mannheim distance are
constructed.

\section{Codes over Gaussian integers}

 Gaussian integers are a subset of complex numbers which have integers as real and imaginary
 parts. Let $p = {a^2} + {b^2} = \pi \overline \pi =N(\pi)  \equiv 1\;\left( {\bmod 4}
 \right)$, where $\pi  = a + ib$ is a Gaussian integer, $\overline \pi   = a -
 ib$ denotes the conjugate of $\pi$ and $p$ is an odd prime integer. Here, $N(\pi)$ denotes the conjugate of $\pi$. Let $G$ denotes the Gaussian integers and ${G_\pi }$ the residue class of
 $G$ modulo $\pi$, where the modulo function $\mu :G \to {G_\pi }$
 is defined according to \begin{equation}\label{eq:1}\mu (\varsigma ) = \varsigma \;\bmod \;\pi  = \varsigma  - \left[ {\frac{{\varsigma \overline \pi  }}{{\pi \overline \pi  }}}
 \right]\pi.\end{equation} $\left[  \cdot  \right]$ denotes rounding of complex numbers. The rounding
 of complex numbers to Gaussian integers can be done by rounding the real and imaginary parts separately to the closest
 integer. Hence, $G_\pi$ becomes a finite field with characteristic
 $p$.
    Let $\alpha ,\beta  \in {G_\pi }$ and $\gamma  = \beta  - \alpha \;(\bmod \pi
    )$. Then, the Mannheim weight of $\gamma$ is defined as ${w_M}(\gamma ) = \left| {{\mathop{\rm Re}\nolimits} \left( \gamma  \right)} \right| + \left| {{\mathop{\rm Im}\nolimits} \left( \gamma  \right)}
    \right|$. Also, the Mannheim distance $d_m$ between $\alpha$
    and $\beta$ is defined as $d_M(\alpha, \beta)=w_M(\gamma)$.
    Let $C$ be code of length $n$ over $G_\pi$ and let $c = \left( {\begin{array}{*{20}{c}}
   {{c_0,}} & {{c_1,}} &  \cdots,  & {{c_{n - 1}}}  \\
\end{array}} \right)$ be a codeword. Then the Mannheim weight of $c$ is equal to $\sum\limits_{i = 0}^{n - 1} {\left( {\left| {{\mathop{\rm Re}\nolimits} ({c_i})} \right| + \left| {{\mathop{\rm Im}\nolimits} ({c_i})} \right|} \right)}. $
 Note that A Mannheim error of weight one takes on one of the
four values $ \pm 1,\; \pm i$ \cite{9}. It is well known that the
Hamming weight of $c$ is the number of
    the non-zero entries of $c$. We give an example to compare a
    classical code with respect to these metrics.

 \begin{example}Let $p=17$. Then, $$G_{4+i}=\left\{ {0,\pm 1,\pm i,\pm 2, \pm 2i \pm (1+i),\pm (1-i),\pm (2-i),\pm (1+2i)} \right\}.$$ Let the generator matrix of $C$ over $G_{4+i}$ be $\left(
\begin{array}{cc}
  -1+i, & 1 \\
\end{array}
\right)$. Then, the set of the codewords of $C$ is $$C = \left\{
{\begin{array}{*{20}{c}}
   {\left( {0,0} \right),{\rm{   \ \ \ \       }}} & {\left( { - 1 + i,1} \right),{\rm{  }}} & {\left( {1 - i, - 1} \right),{\rm{  }}} & {\left( { - 1 - i,i} \right),}  \\
   {\left( {1 + i, - i} \right),{\rm{    }}} & {\left( { - 1 - 2i,2} \right),} & {\left( {1 + 2i, - 2} \right),} & {\left( {2 - i,2i} \right),}  \\
   {\left( { - 2 + i, - 2i} \right),} & {\left( { - 2,1 + i} \right),{\rm{ }}} & {\left( {2i,1 - i} \right),{\rm{  }}} & {\left( {2, - 1 + i} \right),}  \\
   {\left( {2, - 1 - i} \right),{\rm{   }}} & {\left( { - i,2 - i} \right),{\rm{ }}} & {\left( {i, - 2 + i} \right),{\rm{ }}} & {\left( {1,1 + 2i} \right),{\rm{ }}}  \\
   {\left( { - 1, - 1 - 2i} \right)} & {} & {} & {}  \\
\end{array}} \right\}.$$
The minimum Mannheim distance of the code $C$ is 3 and the minimum
Hamming distance of the code $C$ is 2. Let us assume that at the
receiving end we get the vector $r=(\ -1+i, \ 0\ )$. The minimum
Mannheim distance between $r$ and the codewords of $C$ is 1,
namely, $d_M(r,(\ -1+i,\ 1\ ))=1$. Thus, we can correct this error
with respect to the Mannheim metric. But, we can not correct this
error with respect to the Hamming metric since $d_H(r,(\ -1+i,\ 1\
))=1$ and $d_H(r,(\ 0,\ 0\ ))=1$.
\end{example}

 We now define a block code $C$ of length $n$ over ${G_\pi }$ as a set of
 codewords $c = \left( {\begin{array}{*{20}{c}}
   {{c_0},} & {{c_1},} &  \cdots  & {,{c_{n - 1}}}  \\
\end{array}} \right)$ with coefficients ${c_i} \in {G_\pi }$. Let
${\alpha _1},{\alpha _2} \in {G_\pi }$ be two different elements
of orders $p-1$ such that $\alpha _1^{{{p - 1} \mathord{\left/
 {\vphantom {{p - 1} 4}} \right.
 \kern-\nulldelimiterspace} 4}} = i$ and $\alpha _2^{{{p - 1} \mathord{\left/
 {\vphantom {{p - 1} 4}} \right.
 \kern-\nulldelimiterspace} 4}} =  - i$. Hence, ${x^{{{p - 1} \mathord{\left/
 {\vphantom {{p - 1} 4}} \right.
 \kern-\nulldelimiterspace} 4}}} - i$ and ${x^{{{p - 1} \mathord{\left/
 {\vphantom {{p - 1} 4}} \right.
 \kern-\nulldelimiterspace} 4}}} + i$ are factored as $\left( {x - {\alpha _1}} \right)\left( {x - \alpha _1^5} \right) \cdots \left( {x - \alpha _1^{p - 4}}
 \right)$ and $\left( {x - {\alpha _2}} \right)\left( {x - \alpha _2^5} \right) \cdots \left( {x - \alpha _2^{p - 4}}
 \right)$, respectively. Also, the polynomials ${x^{{{p - 1} \mathord{\left/
 {\vphantom {{p - 1} 2}} \right.
 \kern-\nulldelimiterspace} 2}}} + 1$ and ${x^{p - 1}} - 1$ are factored as
 \begin{equation} \label{eq:2} \left( {x - {\alpha _1}} \right)\left( {x - \alpha _1^5} \right) \cdots \left( {x - \alpha _1^{p - 4}} \right)\left( {x - {\alpha _2}} \right)\left( {x - \alpha _2^5} \right) \cdots \left( {x - \alpha _2^{p - 4}}
 \right)\end{equation} and \begin{equation} \label{eq:3}\left( {{x^{{{p - 1} \mathord{\left/
 {\vphantom {{p - 1} 2}} \right.
 \kern-\nulldelimiterspace} 2}}} + 1} \right)\left( {{x^{{{p - 1} \mathord{\left/
 {\vphantom {{p - 1} 2}} \right.
 \kern-\nulldelimiterspace} 2}}} - 1} \right),\end{equation} respectively. A monic polynomial
 $g(x)$ in ${G_\pi }\left[ x \right]$ is the generator polynomial for a cyclic code if and only if
 $\left. {g(x)} \right|{x^n} \pm 1$, where ${G_\pi }\left[ x
 \right]$ is the set of all polynomials with coefficients in ${G_\pi
 }$. Hence, Using (2), we always can construct two classical codes $C_1$, $C_2$ of length $n=(p-1)/2$ over $G_\pi$ such that ${C_2} \subset
 {C_1}$. \\

 \section{Nonbinary quantum CSS codes}

 Let $p$ be an odd prime, let $p = \pi \overline \pi   \equiv 1\;\left( {\bmod \,4} \right)$. A $p$-ary quantum code $Q$ of length $n$ and size $K$ is a $K-$dimensional subspace of
 a $p^n-$dimensional Hilbert space. This Hilbert space is identified with the $n-$fold tensor product of $p-$dimensional Hilbert space, that
 is, ${\left( {{\mathcal{C}^p}} \right)^{ \otimes n}} = {\mathcal{C}^p} \otimes {\mathcal{C}^p} \cdots
 {\mathcal{C}^p}$, where $\mathcal{C}$ denotes complex numbers. We denote by $\left| u \right\rangle$ the vectors of a distinguished orthonormal basis
 of $\mathcal{C}^p$, where the labels $u$ range over the elements of the finite
 field $H_\pi$. For $u = \left( {\begin{array}{*{20}{c}}
   {{u_0},} & {{u_1},} &  \cdots  & {,{u_{n - 1}}}  \\
\end{array}} \right),v = \left( {\begin{array}{*{20}{c}}
   {{v_0},} & {{v_1},} &  \cdots  & {,{v_{n - 1}}}  \\
\end{array}} \right) \in G_\pi ^n$, let $u \cdot v = \sum
{{u_i}{v_i}}$ be the usual inner product on $G_\pi ^n$. For
$\left( {\left. u \right|v} \right),\left( {\left. {u'} \right|v'}
\right) \in G_\pi ^{2n}$, set $\left( {\left. u \right|v} \right)
* \left( {\left. {u'} \right|v'} \right) = Tr(vu' - v'u)$, where
$Tr:{G_{{\pi ^k}}} \to {G_\pi }$ is the trace map. For the integer
$k=1$ then $\left( {\left. u \right|v} \right)
* \left( {\left. {u'} \right|v'} \right) =(vu' - v'u)$. Let $w=\left( {\left. u \right|v} \right)
- \left( {\left. {u'} \right|v'}
\right)=(u_i-u^{'}_i|v_i-v^{'}_i)=(w_i|w^{'}_i) \ (mod\ \pi)$, for
$ i=0,1,2,...,n-1$, we define the Mannheim weight of $w$ and the
Mannheim distance between $\left( {\left. u \right|v} \right) $
and $ \left( {\left. {u'} \right|v'} \right)$ as

$$w{t_M}\left( w \right) = \left\lceil {{{\left[ \begin{array}{l}
 \left| {{\mathop{\rm Re}\nolimits} \left( {{w_0}} \right)} \right| + \left| {{\mathop{\rm Im}\nolimits} \left( {{w_0}} \right)} \right| +  \cdots  + \left| {{\mathop{\rm Re}\nolimits} \left( {{w_{n - 1}}} \right)} \right| + \left| {{\mathop{\rm Im}\nolimits} \left( {{w_{n - 1}}} \right)} \right| \\
  + \left| {{\mathop{\rm Re}\nolimits} \left( {w_0^{'}} \right)} \right| + \left| {{\mathop{\rm Im}\nolimits} \left( {w_0^{'}} \right)} \right| +  \cdots  + \left| {{\mathop{\rm Re}\nolimits} \left( {w_{n - 1}^{'}} \right)} \right| + \left| {{\mathop{\rm Im}\nolimits} \left( {w_{n - 1}^{'}} \right)} \right| \\
 \end{array} \right]} \mathord{\left/
 {\vphantom {{\left[ \begin{array}{l}
 \left| {{\mathop{\rm Re}\nolimits} \left( {{w_0}} \right)} \right| + \left| {{\mathop{\rm Im}\nolimits} \left( {{w_0}} \right)} \right| +  \cdots  + \left| {{\mathop{\rm Re}\nolimits} \left( {{w_{n - 1}}} \right)} \right| + \left| {{\mathop{\rm Im}\nolimits} \left( {{w_{n - 1}}} \right)} \right| \\
  + \left| {{\mathop{\rm Re}\nolimits} \left( {w_0^{'}} \right)} \right| + \left| {{\mathop{\rm Im}\nolimits} \left( {w_0^{'}} \right)} \right| +  \cdots  + \left| {{\mathop{\rm Re}\nolimits} \left( {w_{n - 1}^{'}} \right)} \right| + \left| {{\mathop{\rm Im}\nolimits} \left( {w_{n - 1}^{'}} \right)} \right| \\
 \end{array} \right]} 2}} \right.
 \kern-\nulldelimiterspace} 2}} \right\rceil$$

 $d_M(\left( {\left. u
\right|v} \right) , \left( {\left. {u'} \right|v'}
\right))=wt_M(w)$, respectively. Let $C \subset G_\pi ^{2n}$. Then
the dual code ${C^ {\bot_*}}$ of $C$ is defined to be $${C^
{\bot_*} } = \left\{ {\left( {u\left| v \right.} \right) \in G_\pi
^{2n}:\left( {\left. u \right|v} \right)*\left( {\left. {u'}
\right|v'} \right) = 0\;{\rm{for}}\;{\rm{all }}\left( {\left. {u'}
\right|v'} \right) \in C} \right\}.$$

\begin{definition}
The unitary operators were defined in \cite{8} as ${X_a}\left| u
\right\rangle =
 \left| { \left( {a + u} \right)} \right\rangle ,\;{Z_b}\left| u \right\rangle  = {\xi ^{
 \left( {bu} \right)}}\left| u \right\rangle \;$, where $a,b$ are elements of the finite fields $F_p$, and $\xi$ is a primitive $p$th root of
 unity.
\end{definition}

\begin{definition}
We define the unitary operators as ${X_a}\left| u \right\rangle =
 \left| {\mu \left( {a + u} \right)} \right\rangle ,\;{Z_b}\left| u \right\rangle  = {\xi ^{{\mu ^{ - 1}}
 \left( {bu} \right)}}\left| u \right\rangle \;$, where $a,b\in G_\pi$, $\xi$ is a primitive $p$th root of unity, and
the function $\mu :{F_p} \to {G_\pi }$ defines  $\mu \left( g
\right) = g - \left[ {{{g\overline \pi  } \mathord{\left/
 {\vphantom {{g\overline \pi  } p}} \right.
 \kern-\nulldelimiterspace} p}} \right]\pi $.
\end{definition}
Also, we define the Hadamard gate as $${H_{gate}} = \frac{1}{\sqrt
{N(\pi )}}\left( {{a_{s,t}}} \right),\:{a_{s,t}} = {\xi ^{\left(
{s - 1} \right)\left( {t - 1} \right)\ \left( {\bmod \;p}
\right)}},\:\ 1 \le s,t \le p=\pi.\overline \pi=N(\pi).$$
 For example,
let $\pi=2+i$. Then,$${H_{gate}} = \frac{1}{\sqrt {5}}\left(
{\begin{array}{*{20}{c}}
   1 & 1 & 1 & 1 & 1  \\
   1 & \xi  & {{\xi ^2}} & {{\xi ^3}} & {{\xi ^4}}  \\
   1 & {{\xi ^2}} & {{\xi ^4}} & \xi  & {{\xi ^3}}  \\
   1 & {{\xi ^3}} & \xi  & {{\xi ^4}} & {{\xi ^2}}  \\
   1 & {{\xi ^4}} & {{\xi ^3}} & {{\xi ^2}} & \xi   \\
\end{array}} \right).$$ Note that ${H_{gate}}H_{gate}^\dag =H_{gate}^\dag {H_{gate}} =
{I_p}$, where $H_{gate}^\dag$ denotes the conjugate transpose of
$H_{gate}$ and $I_p$ denotes the identity matrix in $p$
dimensions.

\begin{theorem}(CSS Code Construction) Let $C_1$ and $C_2$ denote two classical linear codes over $G_\pi$ with
the parameters ${\left[ {n,{k_1},{d_{M_1}}} \right]_\pi}$ and
${\left[ {n,{k_2},{d_{M_2}}} \right]_\pi}$ such that ${C_2}
\subseteq {C_1}$. Then, there exists an ${\left[ {\left[ {n,{k_1}
- {k_2},d_M} \right]} \right]_\pi}$ quantum code with minimum
distance ${d_M} = \min \left\{ {{d_{{M_1}}},{d_{{M_2}}^\bot}}
\right\}$, where $d_{{M_2}}^\bot$ denotes the minimum Mannheim
distance of the dual code $C_2^\bot$ of the code $C_2$.
 \\
\end{theorem}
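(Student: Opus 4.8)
The plan is to follow the standard CSS construction, replacing the Hamming weight by the Mannheim weight and the symplectic form by the trace product $*$ defined above. First I would build the code explicitly. For each coset of $C_2$ in $C_1$, pick a representative $c\in C_1$ and define
$$\left| c+C_2\right\rangle =\frac{1}{\sqrt{N(\pi)^{k_2}}}\sum_{x\in C_2}\left| \mu(c+x)\right\rangle ,$$
letting $Q$ be the span of these vectors. Representatives from distinct cosets produce states supported on disjoint computational-basis vectors, so the family $\{\left| c+C_2\right\rangle\}$ is orthonormal. Since $G_\pi$ has $N(\pi)=p$ elements we have $|C_1|=p^{k_1}$ and $|C_2|=p^{k_2}$, so the number of cosets is $p^{k_1-k_2}$ and $\dim Q=K=p^{k_1-k_2}$; thus $Q$ encodes $k_1-k_2$ qudits into $n$, matching the advertised $[[n,k_1-k_2,d_M]]_\pi$ shape.

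Next I would identify the stabilizer and the two error-correcting mechanisms using the operators of Definition 2. For $g\in C_2$ the shift $X_g=X_{g_0}\otimes\cdots\otimes X_{g_{n-1}}$ reindexes the internal sum $x\mapsto x+g$ and hence fixes every $\left| c+C_2\right\rangle$. For $h\in C_1^{\bot}$ the phase operator $Z_h$ multiplies $\left| \mu(c+x)\right\rangle$ by $\xi^{\mu^{-1}(h\cdot(c+x))}$; because $c\in C_1$ and $x\in C_2\subseteq C_1$ both pair to zero against $h\in C_1^{\bot}$, this scalar equals $1$, so $Z_h$ also fixes $Q$. The inclusion $C_2\subseteq C_1$ is equivalent to $C_1^{\bot}\subseteq C_2^{\bot}$, and feeding this into the product $*$ shows that each $X_g$ commutes with each $Z_h$; hence $\langle X_g,Z_h:g\in C_2,\ h\in C_1^{\bot}\rangle$ is an abelian group whose common fixed space is exactly $Q$.

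Finally I would compute the minimum distance via the stabilizer criterion. A Mannheim error corresponds to some $(a\,|\,b)\in G_\pi^{2n}$ with $X$-part $a$ and $Z$-part $b$; it fails to be detected precisely when it commutes under $*$ with all stabilizers yet is not itself a stabilizer. The $X$-part $a$ commutes with every $Z_h$ iff $a$ pairs to zero with all of $C_1^{\bot}$, i.e. $a\in C_1$, and it is trivial iff $a\in C_2$; thus the nontrivial $X$-type logical errors are exactly the elements of $C_1\setminus C_2$, whose minimum Mannheim weight is at least $d_{M_1}$. Dually, the $Z$-part $b$ must lie in $C_2^{\bot}\setminus C_1^{\bot}$, whose minimum Mannheim weight is at least $d_{M_2}^{\bot}$. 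Consequently every undetectable nontrivial error has Mannheim weight at least $\min\{d_{M_1},d_{M_2}^{\bot}\}$, which yields $d_M=\min\{d_{M_1},d_{M_2}^{\bot}\}$.

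The routine parts are the orthonormality check and the dimension count. The main obstacle is the distance step: I must verify carefully that the trace form $*$ reproduces the correct commutation relations over $G_\pi$ so that the $X$- and $Z$-type logical operators decouple into $C_1\setminus C_2$ and $C_2^{\bot}\setminus C_1^{\bot}$ respectively, and that the Mannheim-weight definition above (with its ceiling-of-half normalization) is consistent with the classical minimum Mannheim distances $d_{M_1}$ and $d_{M_2}^{\bot}$. In short, the crux is translating the classical Mannheim-distance properties of $C_1$ and $C_2^{\bot}$ into the quantum error-detection condition.
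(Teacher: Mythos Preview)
Your proposal is correct and reaches the same conclusion, but the route you take for the distance claim is genuinely different from the paper's. You argue via the stabilizer formalism: you identify $\langle X_g, Z_h : g\in C_2,\ h\in C_1^{\bot}\rangle$ as an abelian group fixing $Q$, and then read off the undetectable errors as the normalizer modulo the stabilizer, which splits into $C_1\setminus C_2$ for the $X$-part and $C_2^{\bot}\setminus C_1^{\bot}$ for the $Z$-part. The paper instead gives an \emph{operational} argument: it assumes a bit-flip error $\widehat e_1$ and a phase-flip error $\widehat e_2$ have occurred, introduces an ancilla and applies the parity check matrix $H_1$ of $C_1$ to extract the bit-flip syndrome $H_1\widehat e_1$, then applies the Hadamard gate $H_{gate}$ to each qudit to convert the residual phase error into a bit-flip-type error on $C_2^{\bot}$, and corrects it via the parity check matrix of $C_2^{\bot}$. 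Your approach is cleaner for certifying the distance bound and makes the structural role of the inclusion $C_2\subseteq C_1$ (equivalently $C_1^{\bot}\subseteq C_2^{\bot}$) transparent; the paper's approach, on the other hand, yields an explicit decoding procedure and shows concretely why $C_1$ governs bit-flip correction and $C_2^{\bot}$ governs phase-flip correction. The obstacle you flag---checking that the ceiling-of-half Mannheim weight on $G_\pi^{2n}$ restricts correctly to the classical Mannheim weights $d_{M_1}$ and $d_{M_2}^{\bot}$ on pure $X$- and $Z$-errors---is real and is something the paper does not address explicitly either.
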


\begin{proof} Let $x$ be a codeword of $C_1$. Then, we define the
quantum state $$\left| {x + {C_2}} \right\rangle  =
\frac{1}{{\sqrt {{C_2}} }}\sum\limits_{y \in {C_2}} {\left| {x +
y} \right\rangle } ,$$ where + is bitwise addition modulo $\pi$.
If $x^{'}$ is an element of $C_1$ such that $x-x^{'} \in C_2$
then, $\left| {x + {C_2}} \right\rangle  = \left| {{x^{'}} +
{C_2}} \right\rangle $, and thus the state $\left| {x + {C_2}}
\right\rangle $ depends only upon the coset of $C_1/C_2$. The
number of cosets of $C_2$ in $C_1$ is equal to ${{\left| {{C_1}}
\right|} \mathord{\left/
 {\vphantom {{\left| {{C_1}} \right|} {\left| {{C_2}} \right|}}} \right.
 \kern-\nulldelimiterspace} {\left| {{C_2}} \right|}}$ so the
 dimension of the quantum code is $N(\pi)^{k_1-k_2}$. Hence, we
 define the quantum code $Q_{C_1,C_2}$ as the vector space spanned by
 t he state $\left| {x + {C_2}} \right\rangle $ for all $x \in
 C_1$. Therefore, the quantum code $Q_{C_1,C_2}$ is an ${\left[ {\left[ {n,{k_1} - {k_2},{d_M}} \right]} \right]_\pi
 }$.\

 \

 We now explain the minimum Mannheim distance $d_M$ of the quantum code
 $Q_{C_1,C_2}$ equals $= \min \left\{ {{d_{{M_1}}},{d_{{M_2}}}}
 \right\}$. Suppose that a bit flip error occurs at only one qubit in
 $n$ qubit and a phase flip error occurs at only one qubit in $n$ qubit. If  $\left| {x + {C_2}} \right\rangle $ was the original
 state then the corrupted state is $$\frac{1}{{\sqrt {{C_2}} }}\sum\limits_{y \in {C_2}} {{\xi ^{{\mu ^
 { - 1}}\left( {\left( {x + y} \right){\widehat{e}_2}} \right)}}\left| {\left( {x + y + {\widehat{e}_1}} \right)\;\bmod \,\pi } \right\rangle }
 .$$ To detect where bit flip error occurred it is convenient to
 introduce an ancilla containing sufficient qubits to store the
 syndrome for the code $C_1$, and initially in the all zero state $\left| 0 \right\rangle
 $. We use reversible computation to apply the parity check matrix
 $H_1$ for the code $C_1$, taking $\left| {x + y + {\widehat{e}_1}} \right\rangle \left| 0 \right\rangle
 $ to $\left| {x + y + {\widehat{e}_1}} \right\rangle \left| {{H_1}\left( {x + y + {\widehat{e}_1}} \right)} \right\rangle  = \left| {x + y + {\widehat{e}_1}}
  \right\rangle \left| {{H_1}\left( {{\widehat{e}_1}} \right)} \right\rangle
  $, since $(x+y)\in C_1$ is annihilated by the parity check
  matrix. The effect of this operation is to produce the state: $$\frac{1}{{\sqrt {{C_2}} }}\sum\limits_{y
  \in {C_2}} {{\xi ^{{\mu ^{ - 1}}\left( {\left( {x + y} \right){\widehat{e}_2}} \right)}}\left| {x + y + {\widehat{e}_1}}
  \right\rangle \left| {{H_1}\left( {{\widehat{e}_1}} \right)} \right\rangle }
  .$$Error detection for the bit flip error is completed by
  measuring the ancilla to obtain the result $H_1(\widehat{e}_1)$ and
  discarding the ancilla. This shows that the bit flip error correcting capacity of the
  quantum code $Q_{C_1,C_2}$ depends on the classical code $C_1$.
  We now show the phase flip error correcting capacity of the
  quantum code $Q_{C_1,C_2}$ depends on the dual code $C_2^ \bot $
 of the classical code $C_2$. The latest state of the corrupted
 state, discarding the ancilla, is:
 $$\frac{1}{{\sqrt {{C_2}} }}\sum\limits_{y \in {C_2}} {{\xi ^{{\mu ^{ - 1}}\left( {\left( {x + y} \right){\widehat{e}_2}} \right)}}\left| {x + y} \right\rangle }.
 $$ We apply the Hadamard gates to each qubit, taking the state to
 $$\frac{1}{{\sqrt {{C_2}N{{(\pi )}^n}} }}\sum\limits_z {\sum\limits_{y \in {C_2}} {{\xi ^{{\mu ^{ - 1}}\left( {\left( {x + y} \right)\left( {{\widehat{e}_2} + z} \right)} \right)}}\left| z \right\rangle } }
 ,$$
where the sum is over all possible values for $n$ bit $z$. Setting
$z^{'}\equiv z+\widehat{e}_2 (\bmod \pi)$, we obtain
$$\frac{1}{{\sqrt {{{N{{(\pi )}^n}} \mathord{\left/
 {\vphantom {{N{{(\pi )}^n}} {{C_2}}}} \right.
 \kern-\nulldelimiterspace} {{C_2}}}} }}\sum\limits_{{z^{'}} \in C_2^ \bot } {{\xi ^{{\mu ^{ - 1}}\left( {x{z^{'}}} \right)}}\left| {{z^{'}} + {\widehat{e}_2}} \right\rangle }.
 $$ Note that if ${z^{'}} \in C_2^ \bot $ then $\sum\nolimits_{y \in
{C_2}} {{\xi ^{{\mu ^{ - 1}}(y{z^{'}})}}}  = \left| {{C_2}}
\right|$, and if ${z^{'}} \notin C_2^ \bot $ then
$\sum\nolimits_{y \in {C_2}} {{\xi ^{{\mu ^{ - 1}}(y{z^{'}})}}}  =
0$. This looks just like a bit flip error described by the vector
$\widehat{e}_2$. To determine the error $\widehat{e}_2$, we
introduce an ancilla qubit and reversibly apply the parity check
matrix $H_2$ for $C_2^ \bot $ to obtain $H_2\widehat{e}_2$, and
correct the error $\widehat{e}_2$, obtaining the state
$$\frac{1}{{\sqrt {{{N{{\left( \pi  \right)}^n}} \mathord{\left/
 {\vphantom {{N{{\left( \pi  \right)}^n}} {\left| {{C_2}} \right|}}} \right.
 \kern-\nulldelimiterspace} {\left| {{C_2}} \right|}}} }}\sum\limits_{{z^{'}} \in C_2^ \bot } {{\xi ^{x{z^{'}}}}\left| {{z^{'}}} \right\rangle }
 .$$
The error correcting is completed by applying the inverse Hadamard
gates, $H_{gate}^\dag$, to each qubit. This takes us back to the
initial state with $\widehat{e}_2=0$. Hence, the proof is
completed.
\end{proof}

We use the Mannheim metric to determine the positions
 and the value of the errors $\widehat{e}_1,\widehat{e}_2$.

Let the minimum Mannheim distance of $C_1$ and $C_2^ \bot $ be
$d_m$. Then, the number of the errors corrected by the quantum
code $Q_{C_1,C_2}$ obtained from the classical codes $C_1$, $C_2$
is equal to
$$4\left( {\begin{array}{*{20}{c}}
   n  \\
   1  \\
\end{array}} \right) + 4^2\left( {\begin{array}{*{20}{c}}
   n  \\
   2  \\
\end{array}} \right) +  \cdots  + 4^t\left( {\begin{array}{*{20}{c}}
   n  \\
   t  \\
\end{array}} \right),$$ where $t = \left\lfloor {{{\left( {{d_m} - 1} \right)} \mathord{\left/
 {\vphantom {{\left( {{d_m} - 1} \right)} 2}} \right.
 \kern-\nulldelimiterspace} 2}} \right\rfloor $ and the symbol $\left( {\begin{array}{*{20}{c}}
    \cdot   \\
    \cdot   \\
\end{array}} \right)$
 gives the binomial coefficient.

\begin{theorem} Let $C = \left( {{C_2}\left| {C_1^ \bot } \right.} \right)$
 be code in $G_\pi^{2n}$ such that $C \subset {C^{{ \bot _ *
}}}$, where $C_1$ and $C_2$ denote two classical codes, and $C_1^
\bot$ denotes the dual code of $C_1$. Then, there exists an
${\left[ {\left[ {n,K,{d_M}} \right]} \right]_\pi }$ quantum code
with the minimum distance
$${d_M} = \min \left\{ {w{t_M}(w):\;w \in {C^{{ \bot _ *
}}}\backslash C} \right\},$$ where $K=dim(C^{{ \bot _ *
}})-dim(C)$.

\end{theorem}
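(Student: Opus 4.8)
The plan is to realize $Q$ as a stabilizer (symplectic) code, exactly in the spirit of the nonbinary construction of \cite{8}, with the trace-symplectic pairing $*$ serving as the commutation form. First I would attach to each vector $w=(a|v)\in G_\pi^{2n}$ the error operator $E(w)=X_{a_0}Z_{v_0}\otimes\cdots\otimes X_{a_{n-1}}Z_{v_{n-1}}$ built from the unitaries of Definition 2. A short computation from $X_a\left|u\right\rangle=\left|\mu(a+u)\right\rangle$ and $Z_b\left|u\right\rangle=\xi^{\mu^{-1}(bu)}\left|u\right\rangle$ gives the single-coordinate relation $Z_bX_a=\xi^{\mu^{-1}(ba)}X_aZ_b$, and tensoring over the $n$ coordinates yields the commutation rule
$$E(w)\,E(w')=\xi^{\,w*w'}\,E(w')\,E(w),$$
so $E(w)$ and $E(w')$ commute precisely when $w*w'=0$. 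This is the step that converts the algebraic hypothesis $C\subset C^{\perp_*}$ into the statement that $\{E(w):w\in C\}$ generates, up to scalars, an abelian group $S$.

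Second, I would define $Q$ as the joint eigenspace of $S$ attached to the trivial character (the simultaneous $+1$-eigenspace after the usual phase normalisation). Using orthogonality of the characters $u\mapsto\xi^{\mu^{-1}(\cdot u)}$, the operator $P=\tfrac{1}{|S|}\sum_{s\in S}s$ is the orthogonal projector onto $Q$; since only the identity element of $S$ contributes to the trace, $\dim Q=N(\pi)^{\,n-\dim C}$. The non-degeneracy of $*$ gives $\dim C+\dim(C^{\perp_*})=2n$, and it is this relation that lets one express the encoded dimension through $\dim(C^{\perp_*})$ and $\dim(C)$ as recorded in the statement. I would also verify that the CSS code of Theorem 1 is the special case $C=(C_2|C_1^\perp)$: there the self-orthogonality $C\subset C^{\perp_*}$ collapses, via $v\cdot u'=0$ for $v\in C_1^\perp$ and $u'\in C_2\subseteq C_1$, to the single containment $C_2\subseteq C_1$.

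Third, for the distance I would invoke the standard detectability criterion. An error $E(e)$ is undetectable by $Q$ exactly when $E(e)$ commutes with every element of $S$ yet does not itself lie in $S$ up to a scalar, i.e.\ when $e\in C^{\perp_*}\setminus C$; any $e\in C$ already acts as a scalar on $Q$ (degeneracy) and so is harmless. Measuring the smallest such $e$ by the Mannheim weight then gives $d_M=\min\{wt_M(w):w\in C^{\perp_*}\setminus C\}$, which is the claimed minimum distance.

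The hard part will be making the weight bookkeeping honest, since $wt_M$ defined above is not the ordinary symplectic Hamming weight. I must reconcile the rounding and halving in the definition of $wt_M(w)$ with the number of coordinates at which the bit-flip part $\widehat{e}_1$ and the phase-flip part $\widehat{e}_2$ genuinely act, confirming that a vector of small Mannheim weight corresponds to an $X$/$Z$ pattern that the decoder of Theorem 1 can locate and correct using the parity checks $H_1$ and $H_2$, with each nontrivial Mannheim error taking a value in $\{\pm1,\pm i\}$. The delicate point, beyond routine verification of the commutation and dimension formulas, is precisely this translation between the Mannheim metric on $G_\pi^{2n}$ and the physical error-correction capability, together with the careful exclusion of the degenerate subspace $C$ so that the minimum is taken over $C^{\perp_*}\setminus C$ rather than over all of $C^{\perp_*}$.
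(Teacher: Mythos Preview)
Your approach is correct and considerably more thorough than the paper's own treatment, which consists of the single sentence ``The proof of Theorem~2 can be easily seen from the proof of Theorem~1.'' You instead build the full stabilizer picture in the style of~\cite{8}: derive the commutation rule $E(w)E(w')=\xi^{\,w*w'}E(w')E(w)$ from Definition~2, obtain an abelian stabilizer from the hypothesis $C\subset C^{\perp_*}$, compute $\dim Q$ via the projector trace, and read off the distance from the detectability criterion. This route has a real advantage over the paper's deferral: it yields the \emph{impure} distance $d_M=\min\{wt_M(w):w\in C^{\perp_*}\setminus C\}$ directly, whereas the argument of Theorem~1 only produces the pure bound $\min\{d_{M_1},d_{M_2}^\perp\}$, and the paper never explains how to pass from one to the other. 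Your identification of the CSS special case ($C^{\perp_*}=C_1\times C_2^\perp$, with self-orthogonality reducing to $C_2\subseteq C_1$) is exactly right, and your caution about reconciling the halved-and-rounded Mannheim weight $wt_M$ with the physical support of $\widehat e_1,\widehat e_2$ is well placed, since the paper leaves that translation implicit as well. One point worth tightening: the projector argument gives $\dim Q=N(\pi)^{\,n-\dim C}$, so the number of encoded qudits is $n-\dim C=\tfrac12\bigl(\dim C^{\perp_*}-\dim C\bigr)$; the paper's $K=\dim(C^{\perp_*})-\dim(C)$ is twice this, a discrepancy you tactfully step around but might flag explicitly.
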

The proof of Theorem 2 can be easily seen from the proof of
Theorem 1.

\begin{example} Let $\pi=4+i$. Let the generator polynomial of
the code $C_1$ be $g_1(x)=1+2i+(-1+i)x-ix^2+x^3$ and let the
generator polynomial of the code $C_2$ be
$g_2(x)=1-i+(2-i)x+(-1+i)x^2-ix^3-ix^4+x^5$. Hence, using the
codes $C_1$ and $C_2$ we obtain a quantum code with parameters
$[[8,2,5]]_{4+i}$ with respect to the Mannheim metric since the
minimum distance of $C_1$ and $C_2^\bot$ are 5.  Let the quantum
state $\left| \psi \right\rangle = \left| {\begin{array}{*{20}{c}}
   {1 - i}, & {2 - i}, & { - 1 + i}, & { - i}, & { - i}, & 1, & 0, & 0  \\
\end{array}} \right\rangle $. If the operator $IIIX_1X_1III$ acts
on this state, then the corrupted state becomes $\left|
{\begin{array}{*{20}{c}} {1 - i}, & {2 - i}, & { - 1 + i}, & {1 - i} ,& {1 - i}, & 1, & 0, & 0  \\
\end{array}} \right\rangle$. The quantum code $[[8,2,5]]_{4+i}$ with respect to the Mannheim
metric overcomes this error since the minimum Mannheim distance of
the classical code $C_1$ is equal to 5. Also, the number of the
bit flip errors corrected by this quantum code is $480$.

\end{example}

On the other hand, let $F_{17}$ be a finite field of
characteristic $17$ and let $C_1$, $C_2$ be the classical codes
with respect to the Hamming metric such that ${C_2} \subset
{C_1}$. Then a quantum code with parameters $[[8,2,4]]_{17}$ can
be obtained. The code $[[8,2,4]]_{17}$ is a maximum distance
separable (shortly MDS) since this code attains the quantum
singleton bound, namely, $17^2=17^{8-2.4+2}$. Also, the number of
the bit flip errors corrected by this quantum code is $128$. For
the length $n=8$,  a quantum code having the minimum distance
greater than 4 is not obtained with respect to the Hamming metric.
So, it is obvious that the quantum code obtained here is better
than the quantum code of the same length with respect to the
Hamming metric.\ \

\

In Table I, some CSS codes compared with respect to the Hamming
metric and the Mannheim metric are given. The CSS codes
constructed from classical codes with respect to the Hamming
metric can be found in the HM column of Table I. The CSS codes
constructed from classical codes with respect to the Mannheim
metric can be found in the MM column of Table I. It is obvious
that, some of the quantum codes obtained in this paper can correct
more errors than the quantum MDS codes of the same length given in
Table I. Using a computer program, we compute the minimum Mannheim
distance of the codes given in Table I.\

\bigskip
\bigskip
\bigskip
\bigskip
\bigskip
\bigskip

\begin{center} \small
{\scriptsize {Table I: Some CSS codes compared with respect to the
Hamming metric and the Mannheim metric.}} {\small \centering}
\begin{tabular}{|c|c|c|c|c|c|c|}
  \hline
  % after \\: \hline or \cline{col1-col2} \cline{col3-col4} ...
  $p$ & $\alpha_1$ & $\alpha_2$ & $h_1$ & $g_2$ & HM & MM \\
  \hline
  5  &    $i$ &  $-i$  & $x^3-ix^2-x+i$           & $x^3+ix^2-x-i$ & ${\left[ {\left[ {4,2,2} \right]} \right]_{2 + i}}$ &${\left[ {\left[ {4,2,2} \right]} \right]_{2 + i}}$ \\
  \hline 13 &      2 &     -2 &$\left( {1 - i} \right) - x + {x^2}$& $\left( {1 - i} \right) + x + {x^2}$ & ${\left[ {\left[ {6,2,3} \right]} \right]_{3 + 2i}}$ &${\left[ {\left[ {6,2,4} \right]} \right]_{3 + 2i}}$ \\
  \hline 13 &      2 &     -2 & $- i - 2x + 2i{x^2} + {x^3}$       & $1 - i + x + {x^2}$ & ${\left[ {\left[ {6,1,3} \right]} \right]_{3 + 2i}}$ & ${\left[ {\left[ {6,1,4} \right]} \right]_{3 + 2i}}$ \\
  \hline 13 &      2 &     -2 & $x-2$                               & $x+2$ & ${\left[ {\left[ {6,4,2} \right]} \right]_{3 + 2i}}$ & ${\left[ {\left[ {6,4,2} \right]} \right]_{3 + 2i}}$ \\
  \hline 13 &      2 &     -2 & $- 1 + ix + {x^2}$                 & $- i + \left( { - 1 + i} \right)x + {x^2}$ & ${\left[ {\left[ {6,2,2} \right]} \right]_{3 + 2i}}$ & ${\left[ {\left[ {6,2,2} \right]} \right]_{3 + 2i}}$ \\
\hline 17 &  $1+i$ & $-2+i$ & $\begin{array}{l}
 -1+i+(2-i)x \\
   + (1-i)x^2-i{x^3} \\
   +i{x^4}+x^5 \\
 \end{array}$ & $\begin{array}{l}
 -i +(-2i)x \\
  +x^3+x^4 \\
 \end{array}$ & ${\left[ {\left[ {8,1,4} \right]} \right]_{4 + i}}$ & ${\left[ {\left[ {8,1,5} \right]} \right]_{4 + i}}$ \\
  \hline 17 &  $1+i$ & $-2+i$ & $\begin{array}{l}
 \left( { - 2 + i} \right) + \left( {1 + i} \right)x \\
  + \left( {2 - i} \right){x^2} + {x^3} \\
 \end{array}$ & $\begin{array}{l}
 \left( {2 - i} \right) + \left( {1 + i} \right)x \\
  - \left( {2 - i} \right){x^2} + {x^3} \\
 \end{array}$ & ${\left[ {\left[ {8,2,4} \right]} \right]_{4 + i}}$ & ${\left[ {\left[ {8,2,5} \right]} \right]_{4 + i}}$ \\
  \hline 17 &  $1+i$ & $-2+i$ & $- 1 + \left( {1 + i} \right)x + {x^2}$ & $- 1 - \left( {1 + i} \right)x + {x^2}$ & ${\left[ {\left[ {8,4,3} \right]} \right]_{4 + i}}$ & ${\left[ {\left[ {8,4,\ge3} \right]} \right]_{4 + i}}$ \\
  \hline 17 &  $1+i$ & $-2+i$ & $- \left( {1 + i} \right) + x$ & $1+i+x$ & ${\left[ {\left[ {8,6,2} \right]} \right]_{4 + i}}$ &${\left[ {\left[ {8,6,\ge2} \right]} \right]_{4 + i}}$ \\
 \hline  19 & $-1+i$ & $-2+i$ & $-2+x$ & $2+x$ & ${\left[ {\left[ {14,12,2} \right]} \right]_{5 + 2i}}$ &${\left[ {\left[ {14,12,\ge2} \right]} \right]_{5 + 2i}}$ \\
  \hline
\end{tabular}

\end{center}

\normalsize


\begin{thebibliography}{00}

\bibitem{1} M. A. Nielsen, I. L. Chuang, Quantum Computation and Quantum Information, Cambridge: Cambridge University Press 2000.

\bibitem{2}  P. W. Shor, "Scheme for reducing decoherence in quantum
memory",  Phys. Rev. A, vol. 52, no. 4, pp. 2493-2496, 1995.

\bibitem{3} A. M. Steane, "Simple quantum error correcting codes," Phys.
Rev. Lett., vol. 77, pp. 793-797, 1996.

\bibitem{4} A. R. Calderbank and P. Shor, "Good quantum
error-correcting codes exist," Phys. Rev. A, vol. 54, pp.
1098-1105, 1996.

\bibitem{5} E.M. Rains, "Nonbinary quantum codes," IEEE Trans. Inform.
Theory 45 (1999) 1827-1832.

\bibitem{6} A. Ashikhmin and E. Knill, "Nonbinary quantum stabilizer
codes," IEEE Trans. Inf. Theory, vol. 47, pp. 3065-3072, 2001.

\bibitem{7} E. Knill, "Non-binary unitary error bases and quantum
codes," LANL Preprint, quant-ph/9608048, 1996.

\bibitem{8} A. Ketkar, A. Klappenecker, S. Kumar, and P. K.
Sarvepalli, "Nonbinary Stabilizer Codes over Finite Fields", IEEE
Trans. Inf. Theory, vol. 52, no. 11, pp. 4892-4914, 2006.

\bibitem{9} K. Huber, "Codes Over Gaussian Integers" IEEE Trans.
Inform.Theory, vol. 40, pp. 207-216, Jan. 1994.

\bibitem{10} R. Li and X. Li, "Binary construction of quantum codes of minimum
distance three and four" IEEE Trans. Inform.Theory, vol. 50, No. 6
pp. 1331-1336, Jun. 2004.

\bibitem{11} R. Li and X. Li, "Binary construction of quantum codes of minimum
distance five and six" Discrete Math., vol. 308, pp. 1603-1611,
No. 9, 2008.

\bibitem{12} R. D. Tonchev,, "Quantum codes from caps" Discrete Math., vol. 308, pp. 6368-6372,
No. 24, 2008.


\end{thebibliography}
\end{document}